\RequirePackage{amsmath}
\documentclass[a4paper]{llncs}

\usepackage{amssymb}
\usepackage{enumerate}
\usepackage{tikz}
\usepackage{algorithm}
\usepackage{algorithmicx}
\usepackage[noend]{algpseudocode}

\usepackage{hyperref}
\hypersetup{
  pdftitle = {Optimal one-shot quantum algorithm for EQUALITY and AND},
  pdfauthor = {Janis Iraids},
  pdfkeywords = {},
  pdfstartview=FitH,
  unicode=true
}

\let\doendproof\endproof
\renewcommand\endproof{~\hfill\qed\doendproof}

\makeatletter
\let\OldStatex\Statex
\renewcommand{\Statex}[1][3]{%
  \setlength\@tempdima{\algorithmicindent}%
  \OldStatex\hskip\dimexpr#1\@tempdima\relax}
\makeatother

\newcommand{\ket}[1]{\left| #1 \right\rangle}

\newcommand{\err}[1]{\mathcal{E}\left( #1 \right)}
\newcommand{\errc}[1]{\mathcal{E}^C\left( #1 \right)}
\newcommand{\errt}[2]{\mathcal{E}_{#1}\left( #2 \right)}
\newcommand{\comment}[1]{}
\newcommand{\f}[1]{\textsc{#1}}
\def\H{{\cal H}}
\def\A{{\cal A}}

\title{Optimal One-shot Quantum Algorithm for \f{EQUALITY} and \f{AND}}
\titlerunning{Optimal One-shot Quantum Algorithm for \f{EQUALITY} and \f{AND}}
\author{Andris Ambainis \and J\=anis Iraids}
\authorrunning{Ambainis \and Iraids}

\institute{Faculty of Computing, University of Latvia, Rai\c{n}a bulv\=aris 19, Riga, LV-1586, Latvia,\email{ambainis@lu.lv},\email{janis.iraids@gmail.com}}


\begin{document}
\maketitle

\begin{abstract}
  We study the computation complexity of Boolean functions in the quantum black box model. In this model our task is to compute a function $f:\{0,1\}\to\{0,1\}$ on an input $x\in\{0,1\}^n$ that can be accessed by querying the black box. Quantum algorithms are inherently probabilistic; we are interested in the lowest possible probability that the algorithm outputs incorrect answer (the error probability) for a fixed number of queries. We show that the lowest possible error probability for $\f{AND}_n$ and $\f{EQUALITY}_{n+1}$ is $\frac{1}{2}-\frac{n}{n^2+1}$.
  \keywords{quantum query complexity; bounded error; total Boolean function; and; equality; single query}
\end{abstract}

\section{Introduction}

In this paper we study the computational complexity of Boolean functions in the quantum black box model. It is a generalization of the decision tree model, where we are computing an $n$-bit function $f:\{0,1\}^n\to \{0,1\}$ on an input $x\in\{0,1\}^n$ that can only be accessed through a black box by querying some bit $x_i$ of the input. 
In the quantum black box model the state of the computation is described by a quantum state from the Hilbert space $\H_Q\otimes \H_W \otimes \H_O$ where $\H_Q=\{\ket{0},\ket{1}, \ldots, \ket{n}\}$ is the query subspace, $\H_W$ is the working memory and $\H_O=\{\ket{0},\ket{1}\}$ is the output subspace. A computation using $t$ queries consists of a sequence of unitary transformations $U_t\cdot O_x \cdot U_{t-1} \cdot O_x \cdot \ldots \cdot O_x \cdot U_0$ followed by a measurement, where the $U_i$'s are independent of the input and $O_x=O_{Q,x}\otimes I\otimes I$ with
\[O_{Q,x}\ket{i}=\begin{cases}(-1)^{x_i}\ket{i}=\hat{x}_i \ket{i}\text{, if }i\in[n],\\
\ket{0}\text{, if }i=0,
\end{cases}
\]
is the query transformation, where $x_i \in \{0,1\}$ or equivalently, $\hat{x}_i \in \{-1,1\}$.
The final measurement is a complete projective measurement in the computational basis and the output of the algorithm is the result of the last register, $\H_O$. For and $0\leq \epsilon < \frac{1}{2}$ we denote by $Q_\epsilon(f)$ the smallest number of queries for an quantum algorithm outputting $f(x)$ with probability at least $1-\epsilon$. Usually the $\epsilon$ is omitted from $Q_\epsilon(f)$ because it changes $Q_\epsilon(f)$ by a constant factor, and $Q(f)$ is called the bounded error quantum query complexity of $f$. This complexity measure is widely studied as most computational problems can be expressed in the query model. The most well known examples are by \cite{Gro96,Sho97}. For the searching problem Grover's algorithm is exactly optimal as shown by \cite{Zal99}.

However, if one is interested in computing functions with constant number of inputs (for example, as a part of small circuit), then it may be useful to fix the number queries and minimize the probability of an incorrect answer. In this paper we will be concerned with quantum algorithms performing at most 1 query, thus we introduce $\err{f}$.

\begin{definition}
  Let $f$ be a Boolean function. Then let $\err{f}$ be the minimum error probability for a quantum algorithm that calculates $f$ using just one query, i.e.,
  \[\err{f} = \min_{\A:\A\text{ performs 1 query}} \max_x \Pr[\text{algorithm } \A \text{ does not output }f(x)].\]
\end{definition}

We will be focusing on two Boolean functions defined as follows:
\[\f{EQUALITY}_n(x)=\begin{cases}1\text{, if }x_1=x_2=\ldots=x_n\\
0\text{, otherwise}
\end{cases}\]
and
\[\f{AND}_n(x)=\begin{cases}1\text{, if }x_1=x_2=\ldots=x_n=1\\
0\text{, otherwise}
\end{cases}.\]
In her doctoral thesis \cite{Mis12} gave quantum algorithms showing that
\[\err{\f{EQUALITY}_3}\leq\frac{1}{10};\err{\f{AND}_2}\leq\frac{1}{10};\]
\[\err{\f{EQUALITY}_4}\leq \frac{1}{4};\err{\f{AND}_3}\leq \frac{1}{4};\]
\[\err{\f{EQUALITY}_6}\leq \frac{7}{16};\err{\f{AND}_5}\leq \frac{7}{16}.\]

Our main result asserts that
\begin{theorem}
\[ \err{\f{AND}_n} = \err{\f{EQUALITY}_{n+1}} = \frac{1}{2}-\frac{n}{n^2+1}.\]
\end{theorem}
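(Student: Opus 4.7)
The plan is to establish matching bounds $\epsilon^\ast := (n-1)^2/(2(n^2+1)) = \tfrac{1}{2}-\tfrac{n}{n^2+1}$ for both $\err{\f{AND}_n}$ and $\err{\f{EQUALITY}_{n+1}}$, via explicit 1-query algorithms and a matching semidefinite-programming lower bound.

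For the upper bound I would exhibit explicit algorithms and verify them by direct computation. For $\f{AND}_n$, prepare
\[
 \ket{\psi} = \frac{1}{\sqrt{n^2+1}}\ket{0} + \frac{\sqrt{n}}{\sqrt{n^2+1}}\sum_{i=1}^n\ket{i},
\]
query the oracle, and accept by projecting onto
\[
 \ket{u} = \tfrac{1}{\sqrt{2}}\bigl(-\ket{0}+\tfrac{1}{\sqrt{n}}\sum_{i=1}^n\ket{i}\bigr).
\]
A short computation gives $|\langle u|O_x|\psi\rangle|^2 = (2k-(n-1))^2/(2(n^2+1))$ where $k=|x|$, which equals $1-\epsilon^\ast$ at $k=n$ and is at most $\epsilon^\ast$ for $k<n$. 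The analogous construction for $\f{EQUALITY}_{n+1}$ uses the uniform state $\ket{\psi} = (n+1)^{-1/2}\sum_{i=1}^{n+1}\ket{i}$ together with a suitable measurement vector, yielding acceptance probability $((n+1)-2k)^2/(2(n^2+1))$.

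For the lower bound I would express the acceptance probability of an arbitrary 1-query algorithm in the form $p(x) = \hat{x}^T\Gamma\hat{x}$ with $\hat{x}_i = (-1)^{x_i}$ and $\hat{x}_0=1$, where $\Gamma$ is positive semidefinite and satisfies $\Gamma \le \operatorname{diag}(d)$ for some probability distribution $d$ over $\{0,\ldots,n\}$ (the two conditions encode $M_1 \ge 0$ and $M_1 \le I$ together with the state normalization). Permutation symmetry---and, for EQUALITY, the extra bit-flip symmetry that forces the entries $\Gamma_{0i}$ to vanish---collapses $\Gamma$ to a four-parameter (resp.\ three-parameter) SDP in $\gamma_0,\beta,A,B$. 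Splitting the PSD constraint into the $S_n$-invariant and anti-invariant blocks yields $A\ge B$ and $\gamma_0(A+B(n-1))\ge n\beta^2$; imposing these together with the tight acceptance conditions forces $\beta = -(1-2\epsilon)/(4n)$, $A=B=(1-2\epsilon)/(4n)$, and $\gamma_0 u = n\beta^2$ with $u:=A+B(n-1)$. Substituting the normalization $\gamma_0+nu=1/2$ collapses everything to the single identity $2\epsilon(n^2+1)=(n-1)^2$, whence $\epsilon \ge \epsilon^\ast$.

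The principal difficulty is that the ordinary polynomial method is not tight here: for $n=2$ the multilinear polynomial $x_1x_2$ is itself $\f{AND}_2$ and has degree $2$, so the LP relaxation allows $\epsilon=0$ while the truth is $1/10$. The full PSD/SDP structure of 1-query acceptance probabilities is genuinely needed, and the bulk of the technical work lies in identifying which PSD constraints are simultaneously tight at the optimum; once that is done the algebra reduces to the identity above and the matching algorithm is essentially read off from the SDP's primal solution.
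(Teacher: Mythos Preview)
Your approach is essentially correct and arrives at the right answer, but it differs from the paper's in both halves.

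\textbf{Upper bound.} The paper exhibits a single algorithm for $\f{EQUALITY}_{n+1}$ (uniform superposition, Fourier transform, measure, then classically randomize on outcome~$\ket{1}$) and obtains the bound for $\f{AND}_n$ via the reduction $\f{AND}_n(x)=\f{EQUALITY}_{n+1}(x,1)$. You instead give two direct algorithms, each consisting of a fixed starting state and a single rank-one POVM element; this bypasses the reduction entirely and makes the matching with your primal SDP solution transparent. Your $\f{AND}_n$ algorithm (with the non-uniform weight on $\ket{0}$) is genuinely different from anything in the paper and is a nice by-product of the SDP viewpoint.

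\textbf{Lower bound.} The paper first invokes Blekherman's theorem to show that the symmetrized acceptance probability of a one-query algorithm has the form $p(s)=\sum_i(a_is+b_i)^2+s(n-s)\sum_j c_j^2$, and then argues geometrically by case analysis on the shape of the resulting parabola (concave; convex with vertex at $s\le 0$; convex with vertex in $[0,n]$), the last case using non-negativity of $p$ on $[0,n]$. Your route is the SDP characterisation $p(x)=\hat x^{T}\Gamma\hat x$ with $0\preceq\Gamma\preceq\operatorname{diag}(d)$, $\sum_i d_i=1$, followed by $S_n$-symmetry reduction to the four parameters $(\gamma_0,\beta,A,B)$ and block-diagonalisation. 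This is equivalent in spirit---both exploit that the acceptance probability is a sum of squares, not merely a degree-$2$ polynomial---but avoids Blekherman's theorem and the geometric case split, trading them for an explicit finite SDP.

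Two points to tighten in your write-up. First, in your sketch only the constraint $\Gamma\succeq 0$ is unpacked ($A\ge B$ and $\gamma_0 u\ge n\beta^2$); the companion constraint $\operatorname{diag}(d)-\Gamma\succeq 0$ must also be block-diagonalised and used, otherwise the SDP is unbounded below. Second, the identity $\gamma_0+nu=\tfrac12$ is not a ``normalisation'' but the sum $p(1^n)+p(0^n)=1$, which presupposes that \emph{both} endpoint constraints are tight at the optimum. You acknowledge that proving which constraints are tight is the crux; to make the argument complete you will need either a short perturbation/monotonicity argument (showing the optimum can always be moved to one with those equalities) or an explicit dual certificate for the reduced SDP. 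Once that is in place your proof stands on its own, and is arguably more self-contained than the paper's since it does not rely on the representation-theoretic Blekherman decomposition.
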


The proof can be summarized in a series of three inequalities:
\[\frac{1}{2}-\frac{n}{n^2+1} \leq \err{\f{AND}_n} \leq \err{\f{EQUALITY}_{n+1}} \leq \frac{1}{2}-\frac{n}{n^2+1}.\]

The first inequality can be proven using a characterization of symmetric sum-of-squares polynomials known as the Blekherman's theorem.

{
  \renewcommand{\thetheorem}{\ref{thm:blek}}

\begin{theorem}[Blekherman]
Let $q(\hat x)$ be the symmetrization of a polynomial $p^2(\hat x)$ 
where $p(\hat x)$ is a multilinear polynomial of degree $t\leq \frac{n}{2}$ and $\hat x = (x_1, \ldots, x_n)$.
Then, over the Boolean hypercube $\hat{x}\in\{-1,1\}^n$,
\[ q(\hat x) = \sum_{j=0}^t p_{t-j}(|x|) \left( \prod_{0\leq i < j} (|x|-i) (n-|x|-i) \right) \]
where $p_{t-j}$ is a univariate polynomial that is a sum of squares of polynomials of degree at most $t-j$
and $|x|$ denotes the number of variables $i:\hat x_i=-1$.
\end{theorem}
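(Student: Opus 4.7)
The plan is to decompose $p$ into its $S_n$-isotypic components, use Schur's lemma to kill cross terms upon symmetrization, and then analyze each squared component separately.

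First, I would invoke the standard $S_n$-decomposition of the space $V_{\leq t}$ of multilinear polynomials of degree $\leq t$ on $\{-1,1\}^n$: namely, $V_{\leq t} = \bigoplus_{k=0}^{t} W_k$, where $W_k$ is the isotypic component associated to the Specht module $S^{(n-k,k)}$, realized concretely as the span of ``harmonic'' polynomials that generate $S_n$-submodules of type $(n-k,k)$. Writing $p = \sum_{k=0}^{t} p^{(k)}$ with $p^{(k)} \in W_k$ and expanding $p^2 = \sum_{k,\ell} p^{(k)} p^{(\ell)}$, the fact that the $W_k$ are pairwise non-isomorphic as $S_n$-modules combined with Schur's lemma gives $\mathrm{sym}(p^{(k)} p^{(\ell)}) = 0$ for $k \neq \ell$. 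Hence $q = \sum_{k=0}^{t} \mathrm{sym}((p^{(k)})^2)$.

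Next, I would show that $\mathrm{sym}((p^{(k)})^2)$, regarded as a univariate polynomial in $|x|$, is divisible by $\prod_{0 \leq i < k}(|x|-i)(n-|x|-i)$. This follows from the defining orthogonality of $W_k$: every $h \in W_k$ is orthogonal to all polynomials of degree $<k$, which in particular forces its symmetrization on every Hamming slice of weight in $\{0,\ldots,k-1\}\cup\{n-k+1,\ldots,n\}$ to vanish. Thus the quotient $r_k(|x|) := \mathrm{sym}((p^{(k)})^2)/\prod_{0 \leq i < k}(|x|-i)(n-|x|-i)$ is a univariate polynomial of degree $\leq 2(t-k)$ in $|x|$, and reindexing $j := k$ with $p_{t-j} := r_j$ matches the claimed shape.

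The main obstacle will be upgrading ``$r_k$ is nonnegative'' to ``$r_k$ is a sum of squares of univariate polynomials of degree $\leq t-k$''. I would tackle this by choosing a basis $\{b_\alpha^{(k)}\}$ of $W_k$ that is orthogonal under the symmetrization pairing $\langle f,g\rangle := \mathrm{sym}(fg)$ and normalized so that $\mathrm{sym}((b_\alpha^{(k)})^2) = C_k \prod_{i<k}(|x|-i)(n-|x|-i)$ for a common positive constant $C_k$; such a basis is standard in the theory of zonal spherical functions on the hypercube. Expanding $p^{(k)} = \sum_\alpha c_\alpha(|x|)\, b_\alpha^{(k)}$ with univariate-in-$|x|$ coefficients $c_\alpha$ of degree $\leq t-k$ (the bound coming from $\deg p \leq t$ and each $b_\alpha^{(k)}$ having degree $k$), the square then symmetrizes to $C_k \bigl(\sum_\alpha c_\alpha(|x|)^2\bigr) \prod_{i<k}(|x|-i)(n-|x|-i)$, exhibiting $r_k/C_k$ as the required sum of squares.
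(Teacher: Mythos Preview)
Your plan is sound and uses the same representation-theoretic toolkit as the paper, but it is organized rather differently. The paper writes $q=\overrightarrow{x}^{\,T}Q\,\overrightarrow{x}$ for a positive-semidefinite $\mathfrak S_n$-invariant matrix $Q$, takes the spectral decomposition $Q=\sum_i\lambda_iQ_i$ with $\lambda_i>0$, and then refines each eigenspace into irreducibles $H_{i,j}$. A second application of Schur shows that the projector onto any irreducible of type $(n-b,b)$ is a scalar multiple of $\sum_{i_1,\ldots,j_b}\overrightarrow p_{i_1,\ldots,j_b}\overrightarrow p_{i_1,\ldots,j_b}^{\,T}$ (with $p_{i_1,\ldots,j_b}$ as in Lemma~\ref{lem:irred}), and a short combinatorial probability computation turns $\overrightarrow{x}^{\,T}Q_{i,j}\,\overrightarrow{x}$ into $\prod_{i<b}(s-i)(n-s-i)$ times a \emph{single} square $(S')^2$ of degree at most $t-b$; the sum-of-squares then comes for free from $\lambda_i>0$. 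You instead split $p$ itself into isotypic pieces, kill the cross terms, and inside each $W_k$ use the identification $W_k\cong \mathbb R[|x|]_{\le t-k}\otimes B_k$ together with an orthonormal basis of the base copy $B_k$ of $S^{(n-k,k)}$; this yields $\dim S^{(n-k,k)}$ squares per isotypic block rather than the paper's at most $t-k+1$. Your route avoids the matrix formalism and the combinatorial count, at the price of relying on the tensor description of $W_k$, which the paper never needs.

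One genuine correction: the justification in your second step does not work as written. Orthogonality of $h\in W_k$ to $V_{<k}$ under the hypercube inner product does not force $h$, let alone $\mathrm{sym}(h^2)$, to vanish on slices $|x|<k$: slice indicators have degree $n$, not $<k$. The correct one-liner is that restriction to a slice of weight $s$ is an $\mathfrak S_n$-map into $M^{(n-s,s)}$, which contains no copy of $S^{(n-k,k)}$ when $s<k$ or $s>n-k$; hence every $h\in W_k$ vanishes identically on those slices, and so does $\mathrm{sym}((p^{(k)})^2)$. Likewise, the basis you invoke in step~3 really does exist, but it is worth saying why: by Schur, on the irreducible $B_k$ every $\mathfrak S_n$-invariant bilinear form is a scalar multiple of a fixed one, so $\mathrm{sym}(fg)(x)=c(x)\langle f,g\rangle$ for all $f,g\in B_k$; taking any orthonormal basis and noting that $c(|x|)=\mathrm{sym}(b_1^2)(x)$ is a degree-$2k$ polynomial vanishing at the $2k$ values $|x|\in\{0,\dots,k-1,n-k+1,\dots,n\}$ identifies $c$ with a positive constant times $\prod_{i<k}(|x|-i)(n-|x|-i)$.
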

\addtocounter{theorem}{-1}
}
Even though it is an unpublished result, there are proofs --- see \cite{LP+16} or Section~\ref{sec:blek} in this paper for a considerably shorter proof using representation theory. 

The second inequality is trivial, since 
\[\f{AND}_n(x_1, \ldots, x_n)=\f{EQUALITY}_{n+1}(x_1, \ldots, x_n,1),\]
and so we can use an algorithm for $\f{EQUALITY}_{n+1}$ to calculate $\f{AND}_n$.

The third inequality can be proved by constructing a quantum algorithm for the function $\f{EQUALITY}_{n+1}$. Since the algorithm is very simple we present it before the more involved proof of the first inequality.

If we compare $\err{f}$ with the classical analogue, let us call it $\errc{f}$, \cite{Mis12} has shown that $\errc{\f{EQUALITY}_n}=\frac{1}{2}$ and $\errc{\f{AND}_n}=\frac{1}{2}-\frac{1}{4n-2}$. 

\section{Algorithm for \f{EQUALITY}}
\label{sec:ub}

\begin{theorem}
\[\err{\f{EQUALITY}_{n+1}} \leq \frac{1}{2}-\frac{n}{n^2+1}\]
\end{theorem}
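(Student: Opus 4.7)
The plan is to construct an explicit one-query algorithm whose worst-case error achieves the stated bound. The symmetries of $\f{EQUALITY}_{n+1}$ --- permutations of the $n+1$ input bits, together with the global bit-flip --- suggest restricting to a symmetric initial state of the form $\ket{\psi_0} = \alpha\ket{0} + \beta \sum_{i=1}^{n+1}\ket{i}$ with $\alpha^2 + (n+1)\beta^2 = 1$, and to a symmetric two-outcome measurement.

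The natural choice of measurement is the projection onto the uniform superposition $\ket{+} = \frac{1}{\sqrt{n+1}}\sum_{i=1}^{n+1}\ket{i}$, with the algorithm outputting $1$ whenever $\ket{+}$ is observed. After the query the state is $\ket{\psi_x} = \alpha\ket{0} + \beta\sum_i \hat x_i \ket{i}$, and a short calculation gives
\[\Pr[\text{output }1 \mid x] = \frac{\beta^2 S(x)^2}{n+1},\qquad S(x) = \sum_{i=1}^{n+1} \hat x_i.\]
On $\f{EQUALITY}$ inputs $S(x)^2$ equals $(n+1)^2$, whereas for every other input its maximum is $(n-1)^2$, attained when a single bit disagrees with the rest.

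I would then choose $\alpha$ and $\beta$ to equalize the two error rates $1 - \beta^2(n+1)$ (on $\f{EQUALITY}$) and $\beta^2(n-1)^2/(n+1)$ (on the worst non-$\f{EQUALITY}$ input). Together with the normalization constraint this is a single linear equation in $\beta^2$; it has a unique admissible solution, and the common error evaluates to $(n-1)^2/(2(n^2+1)) = \frac{1}{2} - \frac{n}{n^2+1}$.

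The only remaining point is to confirm that projecting onto $\ket{+}$ fits the formal model of the excerpt, where the output is a computational-basis readout of a dedicated output qubit. This is routine: any unitary that sends $\ket{+}$ to a fixed computational basis state, followed by a controlled flip of the output register on that basis state, does the job. Accordingly I do not anticipate a technical obstacle --- the symmetry considerations essentially force the form of the initial state and the measurement, and matching the two error rates is then a one-line computation.
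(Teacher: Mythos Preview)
Your proposal is correct and is essentially the paper's algorithm: both query from (a scaled copy of) the uniform superposition, project onto $\ket{+}=\frac{1}{\sqrt{n+1}}\sum_{i=1}^{n+1}\ket{i}$, and balance the two error rates by damping the probability of outputting~$1$ by the factor $\frac{1}{2}+\frac{n}{n^2+1}$. The only cosmetic difference is how that damping is implemented --- the paper takes $\alpha=0$ and flips a biased classical coin after observing $\ket{+}$, whereas you encode the same factor into the initial amplitude on $\ket{0}$; the resulting output distributions are identical.
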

\begin{proof}
We will prove that the following algorithm has the claimed error probability:
\begin{algorithm}[H]
  \caption{Algorithm for $\f{EQUALITY}_{n+1}$}
  \begin{algorithmic}[1]
    \State{State space: $\ket{1}, \ket{2}, \ldots, \ket{n+1}$}
    \State{Start in uniform superposition $\sum_{i=1}^{n+1}{\frac{1}{\sqrt{n+1}}\ket{i}}$}
    \State{Query: $\sum_{i=1}^{n+1}{\frac{1}{\sqrt{n+1}}\ket{i}}\xrightarrow{Q}\sum_{i=1}^{n+1}{\frac{(-1)^{x_i}}{\sqrt{n+1}}\ket{i}}$}
    \State{Perform quantum Fourier transform $F_{n+1}\ket{i}=\sum_{j=1}^{n+1}{\frac{\omega^{(i-1)(j-1)}(-1)^{x_j}}{n+1}}\ket{i}$,$\omega=e^{\frac{2\pi}{n+1}i}$:}
    \Statex{$\sum_{i=1}^{n+1}{\frac{(-1)^{x_i}}{\sqrt{n+1}}\ket{i}}\xrightarrow{F_{n+1}}\sum_{i=1}^{n+1}{\sum_{j=1}^{n+1}{\frac{\omega^{(i-1)(j-1)}(-1)^{x_j}}{n+1}}\ket{i}}$}
    \State{Perform a complete measurement}
    \If{the result is state $\ket{1}$}
      \State{With probability $\frac{1}{2}-\frac{n}{n^2+1}$ output 0; otherwise output 1}
    \Else
      \State{Output 0}
    \EndIf
  \end{algorithmic}
\end{algorithm}

First, let us consider the case when $\f{EQUALITY}_{n+1}=1$. In that case the state $\ket{1}$ will be measured with certainty and hence the probability to output the incorrect answer $0$ is $\frac{1}{2}-\frac{n}{n^2+1}$.

If on the other hand the input is such that $\f{EQUALITY}_{n+1}=0$, the algorithm has an opportunity to answer incorrectly only in the case it measures $\ket{1}$. Denote by $m:=\sum_{i=1}^{n+1}{x_i}$. The probability that the algorithm answers 1 is $\left(\frac{m}{n+1}\right)^2\cdot \left(\frac{1}{2}+\frac{n}{n^2+1}\right)$. The value of this expression is maximized when $m=\pm(n-1)$ and so the probability to answer 1 on the worst kind of input (namely the input where only one bit is different from every other bit) is
\[\left(\frac{n-1}{n+1}\right)^2 \left(\frac{1}{2}+\frac{n}{n^2+1}\right)=\left(\frac{n-1}{n+1}\right)^2\frac{(n+1)^2}{2(n^2+1)}=\frac{n^2+1-2n}{2(n^2+1)}=\frac{1}{2}-\frac{n}{n^2+1}.\]
\end{proof}

\section{Lower Bound for \f{AND}}
\label{sec:lb}

\begin{theorem}
\[\err{\f{AND}_n} \geq \frac{1}{2}-\frac{n}{n^2+1}\]
\end{theorem}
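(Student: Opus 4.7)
The plan is to apply the polynomial method in conjunction with Blekherman's theorem. The probability $P(\hat x)$ that a one-query algorithm outputs $1$ is a sum of squares of real multilinear polynomials in $\hat x_1,\ldots,\hat x_n$ of degree at most $1$. Since $\f{AND}_n$ is symmetric, I may symmetrize $P$ without weakening the error bound; applying Blekherman's theorem with $t=1$ to each symmetrized square and summing, the symmetrized acceptance probability takes the form
\[
\bar P(\hat x) \;=\; A(|x|) \;+\; c\,|x|(n-|x|),
\]
where $A$ is a nonnegative univariate polynomial of degree at most $2$ and $c\geq 0$. The correctness guarantee forces $\bar P(n)\geq 1-\epsilon$ and $\bar P(k)\leq\epsilon$ for $k\in\{0,1,\ldots,n-1\}$. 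Since $k(n-k)$ vanishes at $k=0,n$ and is nonnegative on $\{0,\ldots,n\}$, one has $\bar P(n)=A(n)$, $\bar P(0)=A(0)$, and $\bar P(n-1)\geq A(n-1)$, so it suffices to bound $A(n)$ in terms of $A(0)$ and $A(n-1)$.

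The key lemma I would prove is the inequality
\[
A(n)\;\leq\;\frac{n+1}{(n-1)^2}\,A(0) \;+\; \frac{n(n+1)}{(n-1)^2}\,A(n-1)
\]
for every nonnegative quadratic $A$, in which the two coefficients sum to $(n+1)^2/(n-1)^2$. Since a nonnegative univariate quadratic is a sum of squares of linear polynomials, by linearity it suffices to verify this for single squares $A(k)=(a+bk)^2$. Setting $X=a+b(n-1)$ and $Y=a$, one computes $a+bn = \frac{1}{n-1}(nX-Y)$, and the inequality reduces to the weighted Cauchy--Schwarz bound $(nX-Y)^2\leq (n+1)(nX^2+Y^2)$, which is tight exactly when $X+Y=0$.

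Combining, $\bar P(n)=A(n) \leq \left(\frac{n+1}{n-1}\right)^{\!2}\max_{k<n}\bar P(k) \leq \left(\frac{n+1}{n-1}\right)^{\!2}\epsilon$, and together with $\bar P(n)\geq 1-\epsilon$ this gives
\[
\epsilon\;\geq\;\frac{(n-1)^2}{(n-1)^2+(n+1)^2}\;=\;\frac{(n-1)^2}{2(n^2+1)}\;=\;\frac{1}{2}-\frac{n}{n^2+1}.
\]
The main obstacle is guessing the correct dual coefficients in the key lemma. The guide is that the algorithm of Section~\ref{sec:ub} attains equality throughout, and its extremal Blekherman profile is $A(k)=(2k-(n-1))^2$, which is maximized on $\{0,1,\ldots,n-1\}$ at exactly the two endpoints $k=0$ and $k=n-1$; this suggests anchoring the Lagrange certificate on those two points and choosing weights so that the Cauchy--Schwarz step above becomes tight.
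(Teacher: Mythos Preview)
Your proof is correct and starts from the same foundation as the paper's: both symmetrize the acceptance probability via Blekherman's theorem with $t=1$ to obtain $\bar P(s)=A(s)+c\,s(n-s)$ with $A$ a sum of squares of affine forms and $c\ge 0$, and both then effectively reduce to the three Hamming weights $0$, $n-1$, $n$. From there the arguments diverge. The paper carries out a geometric case analysis on the shape of $\bar P$ (constant, line, concave parabola, convex parabola with vertex left of $0$, convex parabola with vertex in $[0,n]$), applying in each case vertical and horizontal ``stretchings'' that do not increase $\epsilon$ until the extremal polynomial is pinned down explicitly. You instead produce a single dual certificate, the linear inequality $A(n)\le \frac{n+1}{(n-1)^2}A(0)+\frac{n(n+1)}{(n-1)^2}A(n-1)$, verified on squares by a one-line Cauchy--Schwarz, which immediately gives $1-\epsilon\le \bigl(\tfrac{n+1}{n-1}\bigr)^{2}\epsilon$. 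Your route is shorter and avoids all case splits; what it trades away is the paper's pictorial identification of the extremal parabola (vertex at $s=\frac{n-1}{2}$) and the visible role of the nonnegativity constraint on $[0,n]$, which in your argument is encoded implicitly in the fact that the lemma is proved only for sums of squares.
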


\begin{proof}
  First, we will restrict the domain of the inputs of the $\f{AND}_n$ function to bit lists with Hamming weight of $0$, $n-1$ or $n$. It turns out that this promise problem has the same optimal error probability. Consider any quantum algorithm computing $\f{AND}_n$ with error probability $\epsilon$. Following the familiar reasoning of \cite{BBC+98} we can write the probability that the algorithm outputs 1 as a sum-of-squares polynomial of degree at most 2:
  \[\Pr[\text{algorithm outputs }1]=\sum_i{p_i^2(\hat{x}_1, \ldots, \hat{x}_n)}.\]
  From Blekherman's theorem we obtain that by symmetrization there must exist a degree at most 2 univariate polynomial of the form
  \[p(s)=\sum_i{(a_is+b_i)^2}+(n-s)s\sum_j{c_j^2}\]
  such that
  \[1-\epsilon \leq p(s) \leq 1\]
  when $s=|x|=n$ and
  \[0 \leq p(s) \leq \epsilon\]
  when $s=|x|$ and $s\in\{0,n-1\}$.

  A geometric representation of the potential regions where $p(s)$ intersects $s=0$, $s=n-1$ and $s=n$ is depicted in Figure~\ref{fig:err}.
  \begin{figure}
\begin{center}
  \begin {tikzpicture}[scale=2]
    \draw (-0.1, 0) -- (4.1, 0);
    \draw[->] (0, -0.1) -- (0, 1.1);
    \draw[dashed] (0, 0.3) -- (4.1, 0.3);
    \draw[dashed] (0, 0.7) -- (4.1, 0.7);
    \draw (0.07,0) -- (-0.07, 0) node[anchor=east] {$0$};
    \draw (0.05,0.3) -- (-0.05, 0.3) node[anchor=east] {$\epsilon$};
    \draw (0.07,0.5) -- (-0.07, 0.5) node[anchor=east] {$\frac{1}{2}$};
    \draw (0.05,0.7) -- (-0.05, 0.7) node[anchor=east] {$1-\epsilon$};
    \draw (0.07,1) -- (-0.07, 1) node[anchor=east] {$1$};
    
    \draw (0,0.07) -- (0, -0.07) node[anchor=north] {$0$};
    \draw (3,0.07) -- (3, -0.07) node[anchor=north] {$n-1$};
    \draw (4,0.07) -- (4, -0.07) node[anchor=north] {$n$};

    \draw[ultra thick] (0,0) -- (0,0.3);
    \draw[ultra thick] (3,0) -- (3,0.3);
    \draw[ultra thick] (4,0.7) -- (4,1);
  \end {tikzpicture}
\end{center}
  \caption{Regions where $p(s)$ may intersect $s=0$, $s=n-1$ and $s=n$}
  \label{fig:err}
  \end{figure}
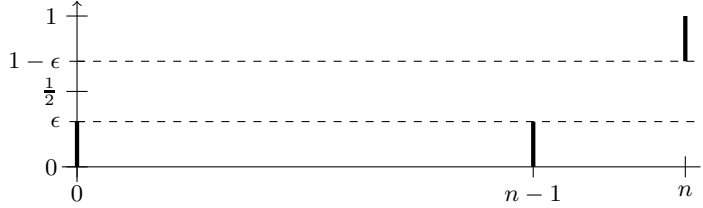
  Clearly, a degree 0 polynomial $p$ --- a constant, would produce an error probability $\epsilon=\frac{1}{2}$. Consider a degree 1 polynomial --- a straight line. We will apply transformations to the line that do not increase (but may decrease) the error probability it achieves $\epsilon$. First, we stretch the line vertically with respect to the horizontal line $y=\frac{1}{2}$ until it passes through the origin. Then we stretch the line vertically with respect to the horizontal line $y=0$ until it passes through both $(n-1,\epsilon)$ and $(n,1-\epsilon)$. This line has a slope $\frac{\epsilon}{n-1}=\frac{1-\epsilon}{n}$ and so $\epsilon=\frac{1}{2}-\frac{1}{4n-2}$.

  Finally, consider a degree 2 polynomial $p$ --- a parabola. If the parabola is concave, we may reason similarly as in the line case, except, the point $(n-1,\epsilon)$ must now be above the line passing through $(0,0)$ and $(n,1-\epsilon)$ and so the error probability is higher. If the parabola is convex, we consider further two cases.
  \begin{enumerate}[a)]
  \item If the vertex of the parabola has $s\leq0$, then we perform the same vertical stretchings. Since the parabola now passes through $(0,0)$ we can describe it with an equation $as^2+bs$ where $a>0$. Since the vertex of the parabola has $s\leq 0$, the coefficient $b$ must be non-negative. The smallest $\epsilon$ possible for such parabolas can be described through the system
    \[\begin{split}
    1-\epsilon&=\max_{a,b}{an^2+bn}\quad \text{such that}\\
    &\quad\quad\begin{cases}an^2+bn+a(n-1)^2+b(n-1)=1\\
        b\geq 0
     \end{cases}
    \end{split}
    \]
    From the equality we can express $b=\frac{1-a(n^2+(n-1)^2)}{2n-1}$ and hence $a\leq \frac{1}{n^2+(n-1)^2}$. Plugging it all into the objective function we have that
    \[\begin{split}1-\epsilon &\leq an^2+\frac{n(1-a(n^2+(n-1)^2))}{2n-1}\leq\\
    &\leq \frac{n^2}{n^2+(n-1)^2} \leq \frac{(n+1)^2}{2(n^2+1)} = \frac{1}{2}+\frac{n}{n^2+1}
    \end{split}
    \]
  \item
    If the vertex of the parabola has $s\geq 0$ then clearly the vertex has to be in the interval $s\in[0,n]$. Therefore we use the property from Blekherman's characterization that the polynomial $p(s)$ is non-negative in the interval $s\in[0,n]$, i.e., the term $\sum_i{(a_is+b_i)^2}$ is non-negative everywhere and $(n-s)s\sum_j{c_j^2}$ is non-negative for $s\in[0,n]$. Now we stretch the parabola horizontally with respect to line $s=n$ until $p(0)=p(n-1)$. This will not increase $\epsilon$ and preserve the non-negativity in the interval $s\in[0,n]$. Next we stretch the parabola vertically with respect to line $y=p(n)$ until $p\left(\frac{n-1}{2}\right)=0$. Again, this step does not increase $\epsilon$. Finally, we stretch vertically with respect to $y=0$ until $p(0)=1-p(n)$. The last step preserved the vertex at $(\frac{n-1}{2},0)$ so the parabola has an equation
    \[p(s)=a\left(s-\frac{n-1}{2}\right)^2.\]
    But from the equation $p(0)=1-p(n)$ we obtain
    \[a\left(\frac{n-1}{2}\right)^2=1-a\left(\frac{n+1}{2}\right)^2;\]
    \[a=\frac{2}{n^2+1}.\]
    Consequently,
    \[\epsilon \geq p(0) = \frac{2}{n^2+1}\left(\frac{n-1}{2}\right)^2=\frac{1}{2}-\frac{n}{n^2+1}.\]
  \end{enumerate}
\end{proof}
Interestingly, the proof only really requires the sum-of-squares characterization when $\frac{n-1}{2}$ is not an integer. The fact that the parabola $p(s)$ is non-negative at $s=\frac{n-1}{2}$ is sufficient.

\section{Proof of Blekherman's Theorem}
\label{sec:blek}

In this section we prove Blekherman's theorem.
\begin{theorem}[Blekherman]
\label{thm:blek}
Let $q(\hat x)$ be the symmetrization of a polynomial $p^2(\hat x)$ 
where $p(\hat x)$ is a multilinear polynomial of degree $t\leq \frac{n}{2}$ and $\hat x = (x_1, \ldots, x_n)$.
Then, over the Boolean hypercube $\hat{x}\in\{-1,1\}^n$,
\[ q(\hat x) = \sum_{j=0}^t p_{t-j}(|x|) \left( \prod_{0\leq i < j} (|x|-i) (n-|x|-i) \right) \]
where $p_{t-j}$ is a univariate polynomial that is a sum of squares of polynomials of degree at most $t-j$
and $|x|$ denotes the number of variables $i:\hat x_i=-1$.
\end{theorem}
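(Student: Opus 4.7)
My plan is representation-theoretic. Since $t \leq n/2$, the space $P_{\leq t}$ of multilinear polynomials of degree at most $t$ in $\hat x_1,\ldots,\hat x_n$ carries an $S_n$-action by variable permutation, and applying the branching $M^{(n-k,k)}=\bigoplus_{j=0}^{k} S^{(n-j,j)}$ degree by degree gives
\[
P_{\leq t} \;=\; \bigoplus_{j=0}^{t} V_j, \qquad V_j \;\cong\; S^{(n-j,j)} \otimes W_j, \qquad \dim W_j = t-j+1,
\]
where $V_j$ is the $S^{(n-j,j)}$-isotypic component. I would start by writing $p=\sum_{j=0}^{t}p^{(j)}$ with $p^{(j)}\in V_j$.

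The first step kills cross terms. Pointwise polynomial multiplication followed by $\mathrm{Sym}$ defines an $S_n$-equivariant bilinear map $V_i\times V_j\to\{\text{symmetric polynomials}\}$; Schur's lemma forces it to vanish for $i\neq j$, since $S^{(n-i,i)}$ and $S^{(n-j,j)}$ are non-isomorphic irreducibles. Hence $\mathrm{Sym}(p^2)=\sum_{j=0}^{t}\mathrm{Sym}\bigl((p^{(j)})^2\bigr)$ and it suffices to analyze one $j$ at a time.

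The second step extracts the weight factor. The evaluation $p \mapsto p(\hat x)$ at $\hat x$ with $|x|=k$ is $(S_k\times S_{n-k})$-invariant, so it factors through the $(S_k\times S_{n-k})$-invariants of $P_{\leq t}$. By Frobenius reciprocity combined with $M^{(n-k,k)}=\bigoplus_{j=0}^{\min(k,n-k)}S^{(n-j,j)}$, the isotypic $V_j$ contributes no $(S_k\times S_{n-k})$-invariants whenever $j>\min(k,n-k)$, so every $p^{(j)}\in V_j$ vanishes on all $\hat x$ with $|x|<j$ or $|x|>n-j$. Consequently $\mathrm{Sym}\bigl((p^{(j)})^2\bigr)$, viewed as a univariate polynomial in $k=|x|$, is divisible by $\prod_{i=0}^{j-1}(k-i)(n-k-i)$, giving
\[
\mathrm{Sym}\bigl((p^{(j)})^2\bigr)(k)\;=\;Q_j(k)\cdot\prod_{i=0}^{j-1}(k-i)(n-k-i)
\]
with $\deg Q_j\leq 2(t-j)$.

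The main obstacle is showing that $Q_j$ is an SOS of polynomials of degree at most $t-j$ (and not merely non-negative on the integers $\{j,\ldots,n-j\}$). To handle this I would use the tensor structure $V_j \cong S^{(n-j,j)} \otimes W_j$: fix a distinguished polytabloid $\phi_0(\hat x)=\prod_{i=1}^{j}(\hat x_{b_i}-\hat x_{a_i})$ generating one copy of $S^{(n-j,j)}$ in degree $j$, choose an orthonormal basis $\{e_a\}$ of $S^{(n-j,j)}$ for an $S_n$-invariant inner product, and realize $W_j$ via symmetric polynomials of degree $0,1,\ldots,t-j$ in the complementary variables $\hat x_R$. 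Expanding $p^{(j)}=\sum_{a}e_a\otimes f_a$ with $f_a=\sum_b c_{a,b}w_b$ and invoking the Schur-type orthogonality from Step~1 inside $V_j$ (which makes the Specht-basis off-diagonal contributions vanish) reduces $\mathrm{Sym}\bigl((p^{(j)})^2\bigr)$ to $\sum_a K(f_a,f_a;|x|)$, where $K(\cdot,\cdot;|x|)$ is the zonal kernel on $W_j$. A direct computation, using $(\hat x_{b_i}-\hat x_{a_i})^2 = 2(1-\hat x_{a_i}\hat x_{b_i})$ so that $\phi_0^2$ equals $4^j$ on inputs where every pair $(a_i,b_i)$ has opposite signs and $0$ otherwise, then yields a rank-one factorization
\[
K(w_b,w_{b'};k)\;=\;c_{n,j}\,g_b(k)\,g_{b'}(k)\cdot\prod_{i=0}^{j-1}(k-i)(n-k-i),
\]
with $g_b$ a univariate polynomial in $k$ of degree $\leq t-j$. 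This rank-one identity --- the technical heart of the argument, closely related to the dual Hahn orthogonal polynomials of the Johnson scheme --- gives $Q_j=c_{n,j}\sum_{a}\bigl(\sum_b c_{a,b}\,g_b\bigr)^2$, an SOS of polynomials of degree at most $t-j$, as required.
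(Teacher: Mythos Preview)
Your proposal is correct and follows the same representation-theoretic route as the paper, though packaged differently. The paper associates to $p^2$ a positive semidefinite matrix $Q$ with $q(\hat x)=\overrightarrow{x}^T Q \overrightarrow{x}$, eigendecomposes $Q$, and observes that each eigenspace is $\mathfrak{S}_n$-invariant and hence splits further into irreducibles; each irreducible projector is then shown (via Schur) to equal a constant times $\sum_{i_1,\ldots,j_b}\overrightarrow{p}_{i_1,\ldots,j_b}\overrightarrow{p}^T_{i_1,\ldots,j_b}$ for the explicit polytabloid polynomials $p_{i_1,\ldots,j_b}=(\hat x_{i_1}-\hat x_{j_1})\cdots(\hat x_{i_b}-\hat x_{j_b})S(\hat x')$. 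You instead decompose $p$ itself into isotypic pieces and use Schur's lemma twice: once to kill cross-isotypic terms, and once to factor the $S_n$-invariant bilinear form $\mathrm{Sym}(uv)$ on $V_j$ as $\langle\,\cdot\,,\,\cdot\,\rangle_{S^{(n-j,j)}}\otimes K$. The two organizations are equivalent: your rank-one kernel identity $K(w,w';k)=c_{n,j}\,g(w)(k)\,g(w')(k)\cdot\prod_{i<j}(k-i)(n-k-i)$ is exactly what the paper obtains by computing $\overrightarrow{x}^T\Pi_{\wp,\ell}\overrightarrow{x}$ for a single irreducible $\Pi_{\wp,\ell}$, and the ``direct computation'' you sketch (using $(\hat x_{a_i}-\hat x_{b_i})^2\in\{0,4\}$ and the resulting product of conditional probabilities) is precisely the paper's ``Final polynomial'' calculation. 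What your packaging buys is a clean separation of the weight factor via Frobenius reciprocity (your Step~2) before the explicit computation; what the paper's PSD-matrix route buys is that the positivity of the scalar $c_{n,j}$ and of the coefficients in the final SOS come for free from $Q\succeq 0$, whereas in your argument positivity of $c_{n,j}$ must be read off from the computation itself.
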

Our proof utilizes concepts of representation theory. For a description of the core tools of representation theory that we require refer to the first two chapters of \cite{Ser77}.

\subsection{Group representation}

Let $H_\wp$ be a Hilbert space with basis states $\hat x_{S}$ (for all $S\subseteq [n]$) corresponding to monomials $\prod_{i\in S}\hat  x_i$. Then, the vectors in $H_\wp$ correspond to multilinear polynomials in variables $\hat x_i$.
We consider a group representation of the symmetric group $\mathfrak{S}_n$ on $H_\wp$ with transformations $U_{\pi}$ defined by 
$U_{\pi}\hat  x_{S}=\hat x_{\pi(S)}$.
The irreducible representations contained in $H_\wp$ are well known:

Let $S_m(\hat{x}_1, \ldots, \hat{x}_n)=\sum_{i_1, \ldots, i_m} \hat{x}_{i_1} \ldots \hat{x}_{i_m}$ be the $m^{\rm th}$ elementary symmetric polynomial. We use $S_0(\hat{x}_1, \ldots, \hat{x}_n)$ to
denote the constant 1.

\begin{lemma}
  \label{lem:irred}
 A subspace $H\subseteq H_\wp$ is irreducible if and only if there exist $b$ and $\alpha_{m}$ for $m= 0, 1, \ldots, n-2b$ such that
  $H$ is spanned by vectors $\overrightarrow{p}_{i_1, \ldots, j_b}$  corresponding to polynomials
$p_{i_1, \ldots, j_b}$  (for all choices of pairwise distinct $i_1, j_1, \ldots, i_b, j_b\in [n]$) where
  \[ p_{i_1, \ldots, j_b}(\hat{x}_1, \ldots, \hat{x}_n) = (\hat{x}_{i_1}-\hat{x}_{j_1}) \ldots (\hat{x}_{i_b}-\hat{x}_{j_b}) \sum_{m=0}^{n-2b} \alpha_m S_m(\hat{x}') \]
and $\hat{x}'\in\{-1, 1\}^{n-2b}$ consists of all $\hat{x}_i$ for $i\in [n]$, $i\notin\{i_1, \ldots, j_b\}$.
\end{lemma}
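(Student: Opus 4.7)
The plan is to combine the degree grading of $H_\wp$ with the classical decomposition of subset permutation modules of $\mathfrak{S}_n$. Write $H_\wp = \bigoplus_{k=0}^n H_k$, where $H_k$ is spanned by the monomials $\hat{x}_S$ with $|S|=k$. Each $H_k$ is an invariant subspace, isomorphic to the permutation module of $\mathfrak{S}_n$ on $k$-subsets of $[n]$, and the latter decomposes as $\bigoplus_{b=0}^{\min(k,n-k)} V^{(n-b,b)}$ with each two-row Specht module $V^{(n-b,b)}$ appearing with multiplicity one (Young's rule). Summing over $k$, the multiplicity of $V^{(n-b,b)}$ in the whole space $H_\wp$ equals $n-2b+1$.

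For the \emph{if} direction, I would first verify invariance: any $\pi \in \mathfrak{S}_n$ sends $p_{i_1,\ldots,j_b}$ to $\pm p_{\pi(i_1),\ldots,\pi(j_b)}$, with the sign coming from transposing pairs in the antisymmetric factors $(\hat{x}_{i_k}-\hat{x}_{j_k})$, so the span is an $\mathfrak{S}_n$-submodule. Next I would identify the span with an equivariant image of $V^{(n-b,b)}$: the product $\prod_{k=1}^b(\hat{x}_{i_k}-\hat{x}_{j_k})$ plays the role of the Specht polytabloid for a two-row tableau of shape $(n-b,b)$, while $\sum_m \alpha_m S_m(\hat{x}')$ is symmetric in the remaining $n-2b$ variables and acts essentially as a scalar tensor factor from the representation-theoretic viewpoint. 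The span is thus the image of an equivariant map out of $V^{(n-b,b)}$, so by Schur's lemma it is either zero or irreducible; a non-vanishing check on a single generator rules out the former.

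For the \emph{only if} direction, I would argue by dimension count. For each fixed $b$, varying $(\alpha_0, \ldots, \alpha_{n-2b})$ produces an $(n-2b+1)$-parameter family of such subspaces, matching the multiplicity $n-2b+1$ of $V^{(n-b,b)}$ in $H_\wp$; and summing $(n-2b+1)\dim V^{(n-b,b)} = (n-2b+1)\bigl(\binom{n}{b}-\binom{n}{b-1}\bigr)$ over $0 \leq b \leq n/2$ recovers $2^n = \dim H_\wp$. Hence the subspaces enumerated in the lemma exhaust every isotypic component, and every irreducible $H \subseteq H_\wp$ must be of the stated form. The main technical obstacle is justifying irreducibility cleanly: since the spanning set $\{p_{i_1,\ldots,j_b}\}$ is overdetermined and obeys Garnir-type relations, the cleanest route is to construct an explicit $\mathfrak{S}_n$-equivariant surjection from $V^{(n-b,b)}$ onto the span by sending each polytabloid to the corresponding $p_{i_1,\ldots,j_b}$, and then invoke the known irreducibility of the Specht module to finish.
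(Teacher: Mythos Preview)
The paper does not actually prove Lemma~\ref{lem:irred}; it simply states the result and refers the reader to \cite{Bel15}. So there is no in-paper proof to compare against, and your proposal stands on its own.

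Your outline is essentially correct and is the standard route: decompose $H_\wp=\bigoplus_k H_k$ by degree, identify each $H_k$ with the permutation module $M^{(n-k,k)}$, apply Young's rule to see only two-row Specht modules $V^{(n-b,b)}$ occur (each with multiplicity one in $H_k$), and then recognise the products $\prod_{k}(\hat x_{i_k}-\hat x_{j_k})$ as Specht polynomials. The ``if'' direction via an equivariant surjection from $V^{(n-b,b)}$ and Schur's lemma is fine; the Garnir relations you flag are exactly the relations that the Specht-polynomial realisation already satisfies, so well-definedness is automatic once you use that realisation rather than an abstract polytabloid basis.

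One point in the ``only if'' direction deserves a sharper justification. You argue that an $(n-2b+1)$-parameter family of $\alpha$'s together with the global dimension count $\sum_b (n-2b+1)\dim V^{(n-b,b)}=2^n$ forces the lemma's subspaces to exhaust each isotypic component. As written this is not quite a valid inference: you still need that distinct (projectively) $\alpha$'s yield distinct irreducible submodules, i.e.\ that the map $\alpha\mapsto H_\alpha$ into the multiplicity space is injective. The cleanest fix is to observe that for the coordinate vector $\alpha=e_m$ the polynomial $p_{i_1,\ldots,j_b}$ is homogeneous of degree $b+m$, so the resulting submodule sits inside $H_{b+m}$ and is therefore the \emph{unique} copy of $V^{(n-b,b)}$ in that graded piece. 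Hence the $e_0,\ldots,e_{n-2b}$ already give $n-2b+1$ independent copies spanning the whole isotypic component, and every irreducible submodule of type $(n-b,b)$ is the $\alpha$-diagonal for some $\alpha$. With that one extra sentence, your argument is complete.
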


See \cite{Bel15} for a short proof of Lemma \ref{lem:irred}.

\subsection{Decomposition of \texorpdfstring{$q(\hat x)$}{q(x)}}

Let
\[ p(\hat x_1, \ldots,\hat  x_n) = \sum_{S: |S|\leq t} a_S \hat x_S .\]
We associate $p^2(\hat x_1, \ldots, \hat x_n)$ with the matrix $(P_{S_1, S_2})$ 
with rows and columns indexed by $S\subseteq [n], |S|\leq t$ defined by $P_{S_1, S_2}=a_{S_1} a_{S_2}$.
Let $\overrightarrow{x}$ be a column vector consisting of all $\hat x_S$ for $S:|S|\leq t$. 
Then, $p^2(\hat x_1, \ldots, \hat x_n) = \overrightarrow{x}^T P \overrightarrow{x}$.
This means that $P$ is positive semidefinite.

For a permutation $\pi\in \mathfrak{S}_n$, let $P^{\pi}$ be the matrix defined by 
\[ P^{\pi}_{S_1, S_2} = a_{\pi(S_1)} a_{\pi(S_2)} \]
and let $Q=\frac{1}{n!} \sum_{\pi\in \mathfrak{S}_n} P^{\pi}$ be the average of all $P^{\pi}$. Then, 
$q(\hat x)=\overrightarrow{x}^T Q \overrightarrow{x}$.
$Q$ is also positive semidefinite (as a linear combination of positive semidefinite matrices $P^{\pi}$ with positive coefficients).

We decompose $Q=\sum_i \lambda_i Q_i$ with $\lambda_i$ ranging over different non-zero eigenvalues and $Q_i$ being the projectors on the respective eigenspaces. Since $Q$ is positive semidefinite, we have $\lambda_i>0$ for all $i$.

We interpret transformations $U_{\pi}$ as permutation matrices defined by $(U_\pi)_{S, S'}=1$ if $S=\pi(S')$ and 
$(U_\pi)_{S, S'}=0$ otherwise.
Then, we have 
\[ U_{\pi} Q U^{\dagger}_{\pi} = \frac{1}{n!} \sum_{\tau\in \mathfrak{S}_n} U_{\pi} P^{\tau} U^{\dagger}_{\pi} =
\frac{1}{n!} \sum_{\tau\in \mathfrak{S}_n} P^{\pi \tau} = \frac{1}{n!} \sum_{\tau\in \mathfrak{S}_n} P^{\tau} = Q.\]
Since we also have
\[ U_{\pi} Q U^{\dagger}_{\pi} = \sum_i \lambda_i U_{\pi} Q_i U^{\dagger}_{\pi}, \]
we must have $Q_i= U_{\pi} Q_i U^{\dagger}_{\pi}$. This means that $Q_i$ is a projector to 
a subspace $H_i\subseteq H_\wp$ that is invariant under the action of $\mathfrak{S}_n$.
If $H_i$ is not irreducible, we can decompose it into a direct sum of irreducible subspaces
\[ H_i = H_{i, 1} \oplus H_{i, 2} \oplus \ldots \oplus H_{i, m_i} .\]
Then, we have $Q_i = \sum_{j=1}^{m_i} Q_{i, j}$ where $Q_{i, j}$ is a projector to $H_{i, j}$
and $Q=\sum_{i, j} \lambda_i Q_{i, j}$.
This means that we can decompose $q(\hat x)=\sum_{i, j}\lambda_i q_{i, j}(\hat x)$
where $q_{i, j}(\hat x)=\overrightarrow{x}^T Q_{i, j} \overrightarrow{x}$ 
and it suffices to show the theorem for one polynomial $q_{i, j}(\hat x)$ instead of
the whole sum $q(\hat x)$.

\subsection{Projector to one subspace.}
Let $H_{\wp,\ell}\subseteq H_{\wp}$ be an irreducible invariant subspace.
We claim that the projection to the subspace $H_{\wp,\ell}$ denoted by $\Pi_{\wp,\ell}$ is of the following form:
\begin{lemma}
  \[ \Pi_{\wp, \ell} = c\rho_{\wp, \ell} \text{ where } 
  \rho_{\wp, \ell} =\sum_{i_1, \ldots, j_b} \overrightarrow{p}_{i_1, \ldots, j_b} \overrightarrow{p}^T_{i_1, \ldots, j_b} 
  \]
for some constant $c$.
\end{lemma}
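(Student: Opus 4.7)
The plan is to apply Schur's lemma to the operator $\rho_{\wp,\ell}$ viewed as an $\mathfrak{S}_n$-equivariant endomorphism of $H_\wp$, and conclude that, up to a nonzero scalar, it must coincide with the projector onto $H_{\wp,\ell}$. Three properties drive the argument: (i) $\rho_{\wp,\ell}$ commutes with the representation, (ii) its image is contained in $H_{\wp,\ell}$, and (iii) it is self-adjoint.

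For (i), the action on polynomials gives $U_\pi \overrightarrow{p}_{i_1,\ldots,j_b} = \overrightarrow{p}_{\pi(i_1),\ldots,\pi(j_b)}$. Conjugating $\rho_{\wp,\ell}$ by $U_\pi$ therefore only relabels the summands, and since the sum ranges over all ordered tuples of pairwise distinct indices $(i_1,j_1,\ldots,i_b,j_b)$, the relabeling is a bijection on the index set, yielding $U_\pi \rho_{\wp,\ell} U_\pi^\dagger = \rho_{\wp,\ell}$. For (ii), Lemma~\ref{lem:irred} exhibits the vectors $\overrightarrow{p}_{i_1,\ldots,j_b}$ (for the appropriate $b$ and coefficients $\alpha_m$) as a spanning set of $H_{\wp,\ell}$, so each rank-one term $\overrightarrow{p}_{i_1,\ldots,j_b}\overrightarrow{p}^T_{i_1,\ldots,j_b}$ maps $H_\wp$ into $H_{\wp,\ell}$. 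Property (iii) is immediate from the outer-product form. Because a self-adjoint operator has its kernel equal to the orthogonal complement of its image, $\rho_{\wp,\ell}$ vanishes on the orthogonal complement of $H_{\wp,\ell}$ inside $H_\wp$.

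Restricted to $H_{\wp,\ell}$, the operator $\rho_{\wp,\ell}$ is then an equivariant self-adjoint endomorphism of an irreducible representation of $\mathfrak{S}_n$; Schur's lemma forces it to be a scalar multiple of the identity, say $c^{-1} I_{H_{\wp,\ell}}$. Combining with the vanishing off $H_{\wp,\ell}$ gives $\rho_{\wp,\ell} = c^{-1} \Pi_{\wp,\ell}$, which is the claim. The scalar is nonzero because the $\overrightarrow{p}_{i_1,\ldots,j_b}$ span $H_{\wp,\ell}$ and hence are not all zero, so $c$ is well-defined.

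The main subtlety to watch is the multiplicity question: if the irreducible type of $H_{\wp,\ell}$ occurs with multiplicity greater than one in $H_\wp$, a bare $\mathfrak{S}_n$-equivariant operator on $H_\wp$ could a priori intertwine distinct isotypic copies, and Schur's lemma would give a matrix of intertwiners rather than a single scalar. Step (ii), which confines the image of $\rho_{\wp,\ell}$ to the specific copy $H_{\wp,\ell}$ via the explicit spanning property, together with self-adjointness in step (iii), eliminates this possibility and legitimately reduces the problem to Schur's lemma on a single irreducible summand.
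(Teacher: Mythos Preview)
Your proof is correct and follows essentially the same route as the paper: both establish that $\rho_{\wp,\ell}$ is $\mathfrak{S}_n$-equivariant and then use irreducibility of $H_{\wp,\ell}$ to force it to act as a scalar there. The paper spells out the Schur step via ``any proper eigenspace of $\rho_{\wp,\ell}$ would be an invariant subrepresentation,'' whereas you invoke Schur's lemma by name and are more explicit about self-adjointness and the vanishing on $H_{\wp,\ell}^\perp$ (and about the multiplicity issue), but the underlying argument is the same.
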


\proof
If we restrict to the subspace $H_{\wp, \ell}$, then $\Pi_{\wp, \ell}$ is just the identity $I$.

On the right hand side, $\rho_{\wp, \ell}$ is mapped to itself by any $U_{\pi}$ (since any $U_{\pi}$ permutes the vectors
$\overrightarrow{p}_{i_1, \ldots, j_b}$ in some way). 
Therefore, all $U_{\pi}$ also map the eigenspaces of $\rho_{\wp, \ell}$ to themselves.
This means that, if $\rho_{\wp, \ell}$ has an eigenspace $V\subset H_{\wp, \ell}$, 
then $U_{\pi}$ acting on $V$ also form a representation of $\mathfrak{S}_n$ but that would contradict
$H_{\wp, \ell}$ being an irreducible representation.
Therefore, the only eigenspace of $\rho_{\wp, \ell}$ is the entire $H_{\wp, \ell}$.
This can only happen if $\rho_{\wp, \ell}$ is $c I$ for some constant $c$.
\qed

\subsection{Final polynomial}

From the previous subsection, it follows that $q_{i, j}(\hat x)$ is a positive constant times
\[ \sum_{i_1, \ldots, j_b}  (\hat{x}_{i_1}-\hat{x}_{j_1})^2 \ldots (\hat{x}_{i_b}-\hat{x}_{j_b})^2 S^2(\hat{x}') \]
 where $S(\hat{x}')$ is a symmetric polynomial of degree at most $t-b$.
Instead of the sum, we consider the expected value of 
$(\hat{x}_{i_1}-\hat{x}_{j_1})^2 \ldots (\hat{x}_{i_b}-\hat{x}_{j_b})^2 S^2(\hat{x}')$ 
when $i_1, \ldots, j_b$ are chosen randomly. (Since the sum and the expected value differ by a constant factor,
this is sufficient.)

Terms $(\hat x_{i_k} - \hat x_{j_k})^2$ are nonzero if and only if one of $x_{i_k}$ and $x_{j_k}$ is $1$ and the other is $-1$.
Then, for $k=1$, we have
\[ \Pr\left[\{\hat x_{i_1}, \hat x_{j_1}\} =\{-1, 1\} \right] = \frac{2s(n-s)}{n(n-1)} ,\]
since there are $\frac{n(n-1)}{2}$ possible sets $\{\hat x_{i_1}, \hat x_{j_1}\}$ and $s(n-s)$ of 
them contain one $1$ and one $-1$.
For $k>1$,
\[ \Pr\left[\{\hat x_{i_k}, \hat x_{j_k}\} =\{-1, 1\} |
\{\hat x_{i_l}, \hat x_{j_l}\} =\{-1, 1\} \mbox{ for } l\in[k-1] \right] \]
\[ = \frac{2(s-k+1)(n-s-k+1)}{(n-2k+2)(n-2k+1)} ,\]
since the condition $\{\hat x_{i_l}, \hat x_{j_l}\} =\{-1, 1\}$ for $l\in [k-1]$ means that, among the remaining variables, there are
$s-k+1$ variables $\hat x_j=-1$ and $n-s-k+1$  variables $\hat x_j=1$ and $n-2k+2$ variables in total (and, given that,
the $k=1$ argument applies). Thus,
\[ \Pr \left[ \prod_{k=1}^b{(\hat{x}_{i_k}-\hat{x}_{j_k})^2} =1 \right] = \frac{2^b s(s-1) \ldots (s-b+1) (n-s) \ldots (n-s-b+1)}{n(n-1) \ldots (n-2b+1)} .\]
Since $S$ is a symmetric polynomial, we have $S(\hat{x}')=S'(s')$ where $S'$ is a polynomial of one variable $s'$, with $s'$ equal to
the number of variables $\hat{x}'_j=-1$. Since there are $b$ variables $\hat{x}_j=-1$ that do not appear in $\hat{x}'$,
we have $s'=s-b$. This means that $S'$ can be rewritten as a polynomial in $s$ (instead of $s'$).

\section{Conclusion}
In this paper we have shown that
\[ \err{\f{AND}_n} = \err{\f{EQUALITY}_{n+1}} = \frac{1}{2}-\frac{n}{n^2+1}.\]
There is a natural way to generalize $\err{f}$ to any fixed number of queries $t$. We may denote it by $\errt{t}{f}$ and have
\[\errt{t}{f} = \min_{\A:\A\text{ performs }t \text{ queries}} \max_x \Pr[\text{algorithm } \A \text{ does not output }f(x)].\]
From the numerical experiments of \cite{MJM15} it seems that the connection between $\f{EQUALITY}_{n+1}$ and $\f{AND}_n$ goes much deeper.
\begin{conjecture}
  For all positive integers $t$ and $n$:
\[  \errt{t}{\f{EQUALITY}_{n+1}}=\errt{t}{\f{AND}_n}.\]
\end{conjecture}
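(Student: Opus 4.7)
The inequality $\errt{t}{\f{AND}_n} \leq \errt{t}{\f{EQUALITY}_{n+1}}$ is immediate by the same reduction used for $t=1$: since $\f{AND}_n(x_1,\ldots,x_n)=\f{EQUALITY}_{n+1}(x_1,\ldots,x_n,1)$, fixing the last input of any $t$-query EQUALITY algorithm yields a $t$-query AND algorithm with the same worst-case error. The content of the conjecture is therefore the reverse inequality, and my plan is to mirror the polynomial-method structure of the $t=1$ proof.

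The first step is the standard characterisation: the acceptance probability of a $t$-query algorithm on input $\hat x$ is a sum $\sum_i p_i^2(\hat x)$ in which each $p_i$ is multilinear of degree at most $t$, and conversely any such sum lying in $[0,1]$ on the cube arises, after suitable normalisation, from some $t$-query algorithm. Restricting to inputs of the relevant Hamming weights and symmetrising, Blekherman's theorem rewrites the induced univariate polynomial in $s=|x|$ as
\[ q(s) = \sum_{j=0}^{t} p_{t-j}(s)\prod_{0\le i<j}(s-i)(n-s-i), \]
with each $p_{t-j}$ a sum of squares of one-variable polynomials of degree at most $t-j$. The question then becomes a comparison of the minimal $\epsilon$ achievable in this Blekherman cone under two promise structures: AND requires $q(n)\ge 1-\epsilon$ and $q(s)\le \epsilon$ on $\{0,\ldots,n-1\}$, while EQUALITY on $n+1$ variables requires $q(0),q(n+1)\ge 1-\epsilon$ and $q(s)\le \epsilon$ on $\{1,\ldots,n\}$ together with the reflection $q(s)=q(n+1-s)$ coming from EQUALITY's bit-complement symmetry.

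The decisive step is to exhibit a transformation from an optimal AND polynomial $q_A$ to a feasible EQUALITY polynomial $q_E$ of the same degree and the same $\epsilon$, or equivalently a direct algorithmic construction. A first combinatorial candidate is the reflection-and-shift $q_E(s) := \tfrac12(q_A(s-1)+q_A(n-s))$, which has the right behaviour on integer points in $\{1,\ldots,n\}$ but leaves the values $q_A(-1)$ and $q_A(n+1)$ unconstrained; one would need to control these using the non-negativity of the $(s-i)(n-s-i)$ factors in the Blekherman decomposition. On the algorithmic side, I would try to generalise the Fourier algorithm of Section~\ref{sec:ub} by running an AND-optimal circuit in superposition with its bit-flipped counterpart, controlled by one of the $n+1$ coordinates, since at $t=1$ this collapses to the uniform-superposition procedure used there.

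The main obstacle is that for $t\ge 2$ the Blekherman cone contains many more extremal rays than the three-case parabolic analysis at $t=1$, and the substitution $s\mapsto s-1$ breaks the sum-of-squares structure unless the transformation can be lifted to act on the underlying multilinear $p_i$'s. The numerical evidence of \cite{MJM15} strongly supports equality, but no clean closed form for the optimum at general $t$ is known, so I expect the correct proof will come from an explicit quantum reduction that does not inflate the query count rather than from a purely univariate polynomial manipulation.
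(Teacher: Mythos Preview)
The statement you are attempting to prove is labelled a \emph{Conjecture} in the paper, and the paper offers no proof of it; it is presented as an open problem motivated solely by the numerical experiments of \cite{MJM15}. So there is no ``paper's own proof'' to compare against, and what you have written is not a proof either --- as you yourself acknowledge in your final paragraph, the key step is missing.

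Concretely: the easy direction $\errt{t}{\f{AND}_n}\le\errt{t}{\f{EQUALITY}_{n+1}}$ is correct and follows exactly as you say. For the reverse direction your two candidate constructions both have genuine gaps. The polynomial route relies on the claim that ``any such sum lying in $[0,1]$ on the cube arises, after suitable normalisation, from some $t$-query algorithm''; this converse is \emph{not} known in general --- the gap between degree-$2t$ sum-of-squares representations and $t$-query quantum algorithms is exactly the content of the polynomial-versus-adversary question --- so even a successful transformation $q_A\mapsto q_E$ at the level of Blekherman polynomials would not by itself yield an algorithm for $\f{EQUALITY}_{n+1}$. Your shift-and-reflect map $q_E(s)=\tfrac12\bigl(q_A(s-1)+q_A(n-s)\bigr)$ also evaluates $q_A$ outside $[0,n]$, where the Blekherman factors $\prod(s-i)(n-s-i)$ change sign, so non-negativity is not preserved. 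On the algorithmic side, ``running the AND circuit in superposition with its bit-flipped counterpart, controlled by one of the $n+1$ coordinates'' is not a $t$-query procedure as stated: reading the control coordinate is itself a query, and absorbing it into the existing queries is precisely the difficulty. In short, your write-up is a reasonable outline of \emph{where} one might look, but it does not close the conjecture, and neither does the paper.
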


\section*{Acknowledgements}
The research leading to these results has received
funding from the European Union Seventh Framework Programme (FP7/2007-2013) under grant agreement n${{}^\circ}$ 600700 (QALGO), ERC Advanced Grant MQC, Latvian State Research programme NexIT project No.1.

\bibliographystyle{splncs03}

\phantomsection
\addcontentsline{toc}{chapter}{References}
\bibliography{quantum}

\end{document}